\theoremstyle{plain}
\newtheorem{theorem}{Theorem}[section]
\newcommand{\bch}{\color{blue}\it}
\newcommand{\Mb}{\overline{M}}
\newcommand{\bt}{\widetilde{b}}
\newcommand{\thh}{\widehat{\theta}}
\newcommand{\eps}{\epsilon}
\newcommand{\epsb}{{\bm{\epsilon}}}
\newcommand{\betab}{{\bm{\beta}}}
\newcommand{\kappab}{{\bm{\kappa}}}
\newcommand{\betas}{\beta^\star}
\newcommand{\Gs}{G^\star}
\newcommand{\om}{\omega}
\newcommand{\omb}{{\bm{\omega}}}
\newcommand{\kb}{\bm{k}}
\newcommand{\hb}{\bm{h}}
\newcommand{\nb}{\bm{n}}
\newcommand{\logit}{\mbox{logit}}
\newcommand{\ind}{\overset{ind}{\sim}}
\newcommand{\Be}{\mbox{Be}}
\newcommand{\Bern}{\mbox{Bern}}
\newcommand{\Unif}{\mbox{Unif}}
\newcommand{\PG}{\mbox{PG}}
\newcommand{\Md}{\mbox{Mdn}}
\newcommand{\PT}{\mbox{PT}}
\newcommand{\DP}{\mbox{DP}}
\renewcommand{\AA}{\mathcal{A}}
\newcommand{\FF}{\mathcal{F}}
\newcommand{\ii}{^{(i)}}
\newcommand{\jj}{^{(j)}}
\newcommand{\ip}{{i^+}}
\newcommand{\im}{{i^-}}
\renewcommand{\ss}{^{(s)}}
\newcommand{\bepszs}{\beta_{\epsb0}\ss}
\title{Clustering and Meta-Analysis Using a Mixture of Dependent Linear Tail-Free Priors}
\author{Bernardo Flores \and Peter Müller}
\date{}
\begin{document}
\maketitle

\begin{abstract}
    We propose a novel nonparametric Bayesian approach for meta-analysis
with event time outcomes. The model is an extension of linear
dependent tail-free processes. The extension includes a modification to
facilitate (conditionally) conjugate posterior updating and a
hierarchical extension with a random partition of studies. The
partition is formalized as a Dirichlet process mixture.
The model development is motivated by a meta-analysis of cancer
immunotherapy studies. The aim is to validate the use of relevant
biomarkers in the design of immunotherapy studies. The hypothesis is
about immunotherapy in general, rather than about a specific tumor
type, therapy and marker. This broad hypothesis leads to a very
diverse set of studies being included in the analysis and gives rise
to substantial heterogeneity across studies.
\end{abstract}

\begin{keywords}
{Bayesian nonparametrics}, Dirichlet process mixture,
Polya tree,
Meta analysis
\end{keywords}

\section{Introduction}

Meta-analysis is a widely used approach to synthesize data from
multiple clinical studies that are targeting the same hypothesis of
interest, allowing the pooling of information to produce more robust
inference (\cite{kelley_statistical_2012}). This can be done using
either aggregate or individual participant data. The latter permits to
obtain summary results directly from patient level data using
established statistical techniques; however, obtaining full sample
results from every clinical trial is rarely feasible, so an
aggregate approach is often taken (\cite{kelley_statistical_2012}). This
approach combines summary statistics reported for the individual studies to
estimate a global result. Classically, there are two ways this is
done; using either a random or a fixed effects model
(\cite{kelley_statistical_2012}). In a fixed effects meta-analysis, all
studies are assumed to be measuring the same (noisy) overall effect
$\theta$, so a regression problem is set up by $\thh_i = \theta +
\varepsilon_i$, where $\varepsilon_i$ is a residual with assumed known variance. 
In contrast, the random effects models sets up a hierarchical model
$\thh_i = \theta + \gamma_i + \varepsilon_i$ which takes into
account that the change in study-level conditions can introduce more
heterogeneity than what can be represented with a a homoskedastic
fixed effect model.

When the reported statistic is the median, many methods rely on first using the median to estimate the mean and variance, which introduces additional sources of errors and assumes the outcome variable to be symmetric; whereas the techniques that directly analyse the difference of medians use strong parametric assumptions that limit their flexibility \citep{mcgrath_estimating_2020}. 

Bayesian nonparametric priors (BNP) offer a natural way to relax
restrictive parametric assumptions by considering priors on random
distributions. More formally nonparametric Bayesian models can be
defined as prior probability models for infinite-dimensional parameter
spaces \citep{ghosal_fundamentals_2017}. The Dirichlet process (DP) introduced in \citep{ferguson_bayesian_1973} is arguably the
most popular such prior with support over discrete probability
measures. When used as a mixing measure for a mixture of Gaussian
kernels, the resulting mixture, known as DP mixture (DPM) of
normals, has full support over continuous
densities \citep{ghosal_fundamentals_2017}. The DPM prior, however is not the
simplest model on densities with full support. One can instead
think of constructing a random histogram by partitioning the sample
space and assigning random masses on each partitioning subset. If
these do not vary too wildly, continuity of the resulting probability
density can be achieved. This reasoning resulted in the development of tail-free processes,
with the seminal Pólya tree studied by Lavine in
\citep{lavine_aspects_1992} being the prime example
\citep{ghosal_fundamentals_2017}. Multivariate extensions of it that
allow the inclusion of covariates have been proposed
(\cite{trippa_multivariate_2011}, \cite{jara_class_2011}), but making the
computation scalable to large data sets remains problematic.  

Building on these models, we developed an extension of the linear
dependent tail-free process (LDTP) from \cite{jara_class_2011} to
explicitly deal with large heterogeneity while also permitting us to
directly use the reported medians and confidence intervals to do
inference. We refer to the proposed model as a Bayesian Nonparametric
Meta-Analysis (BNPMA). An alternative construction building on a more general dependent tailfree process using Gaussian process priors on the conditional splitting probabilities is developed in \cite{poli_multivariate_2023}.

This paper is organized as follows: in Section 2 we introduce the
motivating study together with a frequentist analysis of it. In
Section 3, we review the Pólya tree and introduce a BNP mixture of LDTP
priors, including an algorithm for posterior sampling. In
Section 4, we show a simulation study using data generated from a mixture of Weibull regressions. Finally, in Section 5, we implement the model with
the data presented in Section \ref{sec:immunoth}.

\section{A meta-analysis of cancer immunotherapy studies}
\label{sec:immunoth}
We analyze  data collected in \cite{fountzilas_correlation_2023},
which reports summary statistics from phase I and II trials for a type
of immunotherapy agents known as immune checkpoint inhibitors. These
therapies have shown great promise as new-generation cancer treatments
by incrementing the overall treatment outcomes in trials against
standard treatments or placebo \citep{fountzilas_correlation_2023}. The
goal was to evaluate the potential of biomarkers for the design of
immunotherapy studies and, more generally, to optimally match patients
with available treatments. For this the authors performed a
frequentist meta analysis using a random effects meta-regression. The
data is available in \cite{fountzilas_dataset_2023}.

Of the several considered response variables, we are interested in  progress-free survival (PFS), the time from initiation of treatment to the occurrence of disease progression or death. The data report PFS for $S=25$ studies, $s=1,\dots,25$. All selected studies report summaries stratified by marker status for some biomarker related to immunotherapy, defining a total of $I=53$ cohorts, $i=1,\dots,53$ (some studies are divided into more than two cohorts). In the following discussion we shall use $i^-$ and $i^+$ to index matching marker-negative and marker-positive cohorts of a study. We will use $s(i)$ to denote the study that
includes cohort $i$. The data are triples $(l,m,h)$ for each cohort
where $m$ is the median and $(l,h)$ a 95\% confidence interval for
it. The reported covariates were tumor type, therapeutic agent, line of treatment, treatment type (monotherapy, combination, etc.) and presence of biomarker.

Using the \texttt{R} package \texttt{metamedian}, we first implemented a frequentist meta-analysis based on both the median and the (estimated) mean. For the latter, we used the methodology developed in \cite{mcgrath_estimating_2020} to estimate the population mean and standard deviations from the reported inference summaries. We then estimated the difference of medians and means by fitting a random effects model. This resulted in the estimates shown in Table \ref{tab:freq}. Both median and mean were significant, though the variances were not stable due to the large heterogeneity in the reported effect sizes. This is clear in both forest plots, shown in Figure \ref{fig:freq}.

\begin{table}[bt]
\centering
\begin{tabular}{l|lll}
        & Estimate & SE     & P-Value          \\ \hline
Means   & 2.0525   & 0.5172 & \textless 0.0001 \\
Medians & 1.6358   & 0.4319 & 0.0002          
\end{tabular}
\caption{Results from the frequentist meta-analysis.}
\label{tab:freq}
\end{table}

\begin{figure}[bt]
    \centering
    \includegraphics[width=0.47\linewidth]{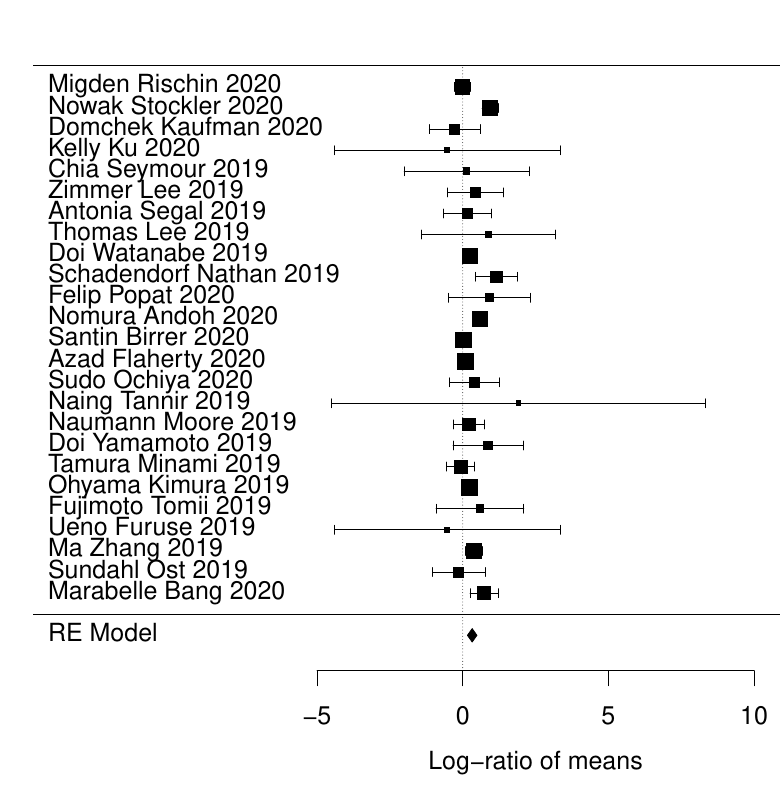}
    \includegraphics[width=0.47\linewidth]{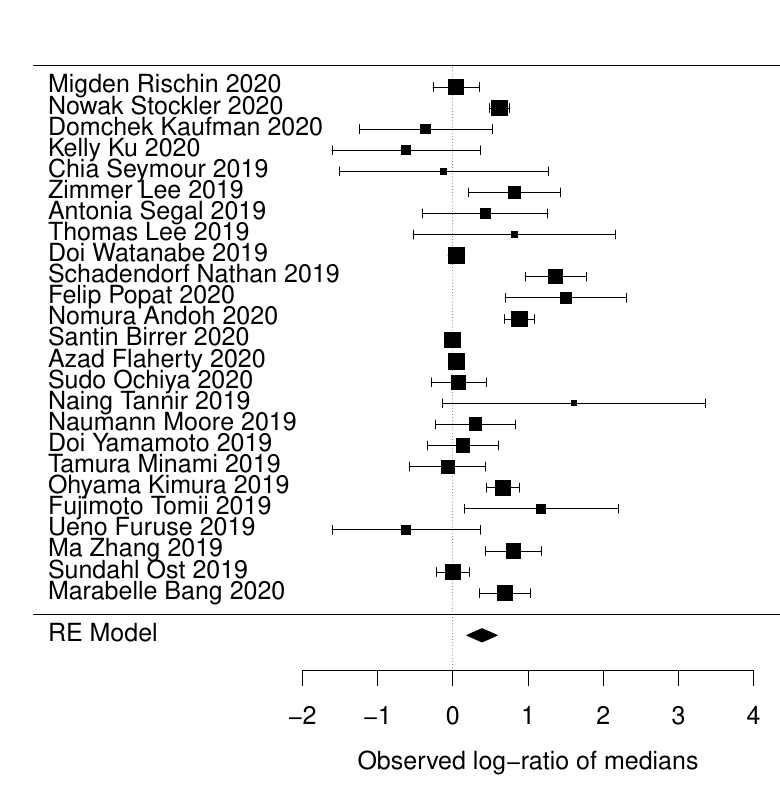}
    \caption{Forest plots of the estimated for the differences in mean (left) and median (right). The last line of the forest plots reports inference on the overall effect $\theta$.}
    \label{fig:freq}
\end{figure}

\section{Multivariate Pólya Tree}

\subsection{Univariate Pólya Tree}

The Pólya tree (PT) \citep{lavine_aspects_1992}  is a stochastic
process that generates random probability measures by recursively
partitioning the desired support and assigning random conditional
splitting probabilities to each branch of the resulting tree. The
implied random probability measure $P$ is often used as a
non-parametric Bayesian prior for unknown distributions such as
sampling models of random effects distributions(
\cite{diana_polya_2019}, \cite{zhao_mixtures_2009},
\cite{boeken_bayesian_2021}). 

The formal definition of a PT prior starts with partitioning the
sample space $B$ into $B=B_0\cup B_1$ and then
forming a tree of partitions
$\Pi$ by recursively splitting $B_{\eps_1\cdots \eps_m} =
B_{\eps_1\cdots \eps_m 0} \cup B_{\eps_1\cdots \eps_m
  1}$, where $\eps_i\in\{0,1\}$. Let $P$ denote the PT random
probability measure. At each split we define a random branching
probability 
\begin{align}
  Y_{\eps_1\cdots \eps_m 0}
  & \equiv P(B_{\eps_1\cdots \eps_m 0} \mid B_{\eps_1\cdots \eps_m} )
    \sim \text{Beta}(\alpha_{\eps_1\cdots \eps_m 0},\alpha_{\eps_1\cdots \eps_m 1})
\label{eq:Yeps}
\end{align}
and the complement $Y_{\eps_1\cdots \eps_m 1}  = 1-Y_{\eps_1\cdots \eps_m 0}$,
 implying 
for any
$\epsb=\eps_1,\cdots,\eps_m$
\begin{equation}\label{eq:polya}
P(B_\epsb) = \prod_{j=1}^m Y_{\eps_1\cdots \eps_{j}}.
\end{equation}

The class of distributions generated by this construction is quite
flexible, as proven in \cite{jara_class_2011}, with different choices
of the parameter sequence $\mathcal{A} = \{\alpha_\eps\}$ implying
densities with different characteristics. Assume that 
 $\alpha_\epsb=a_m$ for all 
$\epsb=\eps_1\cdots\eps_m$ at level $m$, that is,  
all subsets at the same level share the same $\alpha_\epsb$. 
 If the
sequence of parameters $\{a_m\}$ grows sufficiently fast,
specifically if
$
\sum a_m^{-1} < +\infty,
$
then it can be shown \citep{lavine_aspects_1992} that the corresponding Pólya tree generates continuous densities almost surely. 
Further, if for some $M>0$ and a distribution $G$ we set $\alpha_\eps
= M\,G(B_\eps)$, a DP with base measure $G$ and scale parameter $M$ is
recovered. 

Let $\Pi_m=\{B_{\eps_1\cdots\eps_m}\}$ denote the partitioning subsets
at level $m$ and $\Pi=\{\Pi_m,\; m=1,2,\ldots\}$, and let
$\AA=\{\alpha_{\epsb}\}$ denote the hyperprior parameters in
\eqref{eq:Yeps}.
We write $P \sim \PT(\Pi,\AA)$ to indicate a PT random measure.

To center a PT around a desired distribution $F_0$ two
different approaches can be used \citep{lavine_aspects_1992}.

One option is to start with fixed
$\Pi$, and then set $\AA$ by
\begin{equation}
    \alpha_{\eps 0} \propto F_0\left(B_{\eps 0} |  B_\eps\right)
    \label{aeps0}
  \end{equation}
 with proportionality across $\alpha_{\epsb0}, \alpha_{\epsb1}$,    
to ensure the desired prior expectation,  i.e.,
$E\{P(A)\}=F_0(A)$.
Alternatively, one can start by fixing $\AA$ 
with  $\alpha_{\epsb 0} = \alpha_{\epsb 1}$ for all $\epsb$ and then
set the partitioning subsets as the dyadic quantiles of $F_0$,
\emph{i.e.}  define the partition $\Pi_m$ at level $m$  as 
\begin{equation}
 \Pi_m =\{    \left[F_0^{-1}\left(k/2^m\right),
   F_0^{-1}\left((k+1)/2^m\right)\right], \quad k=0,\dots, 2^{m}-1\}.
 \label{eq:Pim}
\end{equation}
In the upcoming construction we use both to maintain a
common centering measure $F_0$ across cohorts, while allowing for
 a different partition sequence for each of them.
For levels $m=1,2$ we will use the first construction with fixed
$B_0, B_1, \ldots, B_{11}$, while for the remaining levels, $m>2$, we
will use a variation of the second construction, involving
dyadic splits of the respective parent set. That is,
$B_{\epsb0}, B_{\epsb1}$ are defined by splitting $B_\epsb$ at the
median of $F_0$ restricted to $B_\epsb$. Since $\Pi_1$ and $\Pi_2$ are
constructed differently, this defines a variation of
\eqref{eq:Pim}.

\subsection{Multivariate Pólya Tree}

The univariate PT model generates a random  probability measure
$P$ with full weak
support \citep{lavine_aspects_1992}, yet it does not naturally extend
to dependent families of distributions.
One construction to achieve
such an extension is introduced in \cite{jara_class_2011}
who proposed the linear dependent tail free process (LDTP)
 for a family $\FF=\{F_x;\; x \in X\}$ of random probability
measures indexed by covariates $x \in X$. 

 We describe the construction for the application to the event
time distributions $P_i$ for cohorts $i=1,\ldots,I$. Let then
$Y\ii_{\eps0}$ denote the splitting probabilities for $P_i$. The LDTP
defines $P_i=F_x$ for a cohort with covariates $x_i=x$ by setting up
$F_x$ with random splitting probabilities 
\begin{align}
\logit\left(Y\ii_{\eps 0}\right) &= \beta_{\eps 0}'x_i\label{eq:ldtp}\\
\beta_{\eps 0} &\sim N(0, \Psi_{\eps 0})\nonumber
\end{align}
for some net of covariance matrices $(\Psi_{\eps 0})_{\eps 0}$.
Similarly to \eqref{eq:polya}, model \eqref{eq:ldtp} defines a random probability measure $P_i$ by $P_i(B_\eps)= \prod_{j=1}^m Y\ii_{\eps_1\cdots \eps_{j}}$, if $\eps=\eps_1\cdots\eps_m$. It can be shown \citep{jara_class_2011} that the LDTP preserves some
of the properties of the PT model, namely (weak) posterior consistency
and conditions for obtaining (Lebesgue) absolutely continuous
densities, assuming that the partition tree corresponds to the dyadic
quantiles of some Borel measure supported on the real line. 


If for each level $m=1,\dots$ we assume a covariance matrix $\Psi_m =
2c\rho(m)^{-1} I$ for some constant $c$ and function $\rho$ such that
$\sum_{j=1}^\infty \rho(j)^{-1} < \infty$, then the LDTP generates
Lebesgue densities with probability one. The flexibility of this
logistic-normal construction lies in the inclusion of covariates with
an arbitrary design matrix, which allows for complex correlations
between the resulting trees.

One feature of this model that limits the applicability for the
desired meta analysis is the use of a common regression parameters
$\beta_\epsb$ across all studies, imposing a strong structure. This
assumption limits the ability to accommodate heterogeneity across
studies. We propose a solution for this problem by using a mixture of
LDTPs, produced by mixing over the coefficients of the logistic
regression in \eqref{eq:ldtp}. \
This translates to splitting the studies into different
groups and independently fitting a logistic regression in each of
them, which produces multimodality in the posterior distributions.
 Another limitation is the use of a common partitioning sequence
$\Pi$ across all cohorts $i$. We will introduce a variation of the
LDTP to address both limitations, to define an inference model
suitable for the desired meta-analysis problem. 

\section{BNP meta analysis}
\subsection{A mixture of LDTPs}

We introduce two extension of model \eqref{eq:ldtp}. First, we allow for study-specific regression coefficients $\beta^{(s)}$ to  fit the expected heterogeneity across studies. Second, we will pool strength across similar studies by introducing clusters of studies with shared common $\beta^{(s)} = \beta^*_k$ for all studies $s$ in cluster $k$.

Our proposal is to introduce mixing over the regression coefficients
$\beta$ with a Dirichlet process, which produces an unbounded number
of clusters of studies with shared coefficients $\beta^{(s)}$.
Recall that $s(i)=s$ indicates the study cohort $i$ belongs to; then the full hierarchical model for all $\eps$ is:

\begin{align}
  \logit\left(Y\ii_{\eps 0}\,|\,\beta_{\eps 0}^{(s)}\right)
  &= {\beta^{(s)}_{\eps 0}}'x_i + z_{\eps 0}^{(s)},~~ s = s(i)  \label{eq:dep}\\
 z_{\eps 0}^{(s)} &\sim \text{N}(0,\psi_{\eps 0}), \nonumber
\end{align}
 with random effects $z_{\eps0}\ss$, and
study-specific effects $\beta^{(s)}$ that arise from a discrete prior,
\begin{eqnarray}
\beta_{\eps 0}^{(s)} \, \mid \,G_{\eps 0} &\sim & G_{\eps 0} \nonumber \\ 
G_{\eps 0} &\sim & DP\left(N\left(0, c 2^{|\eps 0|} I\right), \alpha\right)\nonumber\\
 & & \psi_{\eps 0} \sim \text{InvGamma}(a,b);\; E[\psi_{\eps
                0}]=b/(a-1)
\label{bDP}                
\end{eqnarray}

The normal random effects together with the DP prior on $G_{\eps0}$
define a DP mixture of normals for the study-specific effects
$\beta\ss$.
The model induces the desired clustering on the logistic regression. 
This is because
$G_{\epsb0}$ is  discrete, implying positive probabilities of
ties. Let $\{\betas_{\epsb0,k}, k=1,\ldots,K_{\epsb0} \}$ denote the unique values of $\bepszs$.
The ties create clusters of studies with shared values
$\bepszs=\betas_k$.
This effect is best seen in the predictive distribution, which defines
the following recursive sampling of regression coefficients,
including the point masses for previously sampled coefficients. The
latter induces the ties.
See, for example,
\citep{ghosal_fundamentals_2017} for more discussion. We have
$$
p\left(\beta_{\eps 0}^{(s+1)} \mid \beta_{\eps 0}^{(s)},\dots,
  \beta_{\eps 0}^{(1)}\right)
\propto \alpha N\left(0, c
  2^{|\eps 0|} I \right) +  \sum_{r=1}^{s}
\delta_{\beta_{\eps 0}^{(r)}}, 
$$
where $ \delta_{\beta_{\eps 0}^{(s(i))}}$ is a Dirac measure centered at $\beta_{\eps 0}^{(s(i))}$.
Essentially, at each step the Dirichlet process samples with positive probability from one of the previous sampled values or it chooses a new one from the base measure. This causes different studies to share the same parameters, inducing clustering.
 Finally, in \eqref{bDP} we use a DP mixture with the additional
random effect, rather than only a DP prior. This particular structure of the DP mixture is chosen to induce
additional correlation for cohorts under the same study. 

\subsection{Study-specific partitioning trees and centering} 

A particularly attractive property of tail-free process priors
is that they easily allow to work with interval data. Recall that
the data reports for each cohort the quadruples
$q_i = (\ell_i, m_i, h_i, n_i)_{i=1}^N$,
where $m_i$ is a point estimate and
$(\ell_i,h_i)$ is some $(1-\alpha)$-confidence interval for median
PFS, and $n_i$ is the sample size.
Considering a single cohort, for the moment we drop the index
$i$ in the following discussion.
We then define $\Pi_1$ and $\Pi_2$ using subsets defined by the
quadruple $q_i$, 
\begin{align}
B_0 = F_0^{-1}((0,m)) \quad \text{and} \quad B_1=(m,\infty)\\
B_{00}=(0,l), \quad B_{01}=(l,m), \quad B_{10}=(m,h), \quad
  B_{11}=(h,\infty)
  \label{Bepsii}
\end{align}
 It can be argued that reporting $q_i$ implies counts of observed
PFS times in these intervals. 
For this argument, consider the first $n$ order statistics $X_{(1)},\dots,
X_{(n)}$ of the (unobserved) event times, and let $M$ be the true
median. Consider $Z\sim \text{Bin}(n, 0.5)$ that counts the number of
data points that fall to the left of $M$.
Next find the values $k, j$ such that
$P(Z\geq j)= \alpha/2$ and $P(Z< k)= \alpha/2$, that is, the
$\alpha/2$ and $1-\alpha/2$ quantiles of $Z$, respectively.  
Now, by definition $P(Z\geq j) = P\left(X_{(j)} < M\right)$ and $P(Z<
k) = P\left(X_{(j)} \geq M\right)$, implying that
$P\left(X_{(k)} < M < X_{(j)}\right) = 1-\alpha$,
i.e., $(X_{(k)}, X_{(j)})$ can be argued to be a $(1-\alpha)$
confidence interval for the unknown median $M$. We proceed assuming
that this is how the reported $(\ell_i,h_i)$ were determined (of
course, in reality most were probably based on a Kaplan-Meier curve).  

Let $n_\epsb$ denote the counts for the intervals in $\Pi_1$ and
$\Pi_2$. 
The argument implies 
\begin{equation}
  n_0=\lfloor n/2\rfloor, n_{00} = k \mbox{ and } n_{10}=j,
  \label{nij}
\end{equation}
with the
remaining counts at levels $m=1,2$ implied by the complements to
$n_i$. 

Finally, with \eqref{eq:dep} replacing the beta prior \eqref{eq:Yeps}
we lose the earlier mentioned easy prior centering of the PT
construction. Instead we add
for each cohort a cohort-specific intercept
$c_{\epsb0}\ii$ in \eqref{eq:dep} to ensure $E(Y_{\epsb0}\ii) =
F_0(B_{\epsb0}\ii \mid B_{\epsb}\ii)$, for cohort-specific
partitioning subsets $B_{\epsb}\ii$.
A minor complication arises from the fact that the nonlinear $\logit$
transformation does not preserve expectations. Instead we numerically solve an
optimization problem: 
$$
c_{\eps 0}^i = \arg
\min_{c} \left|F_0^{-1}(B_{\eps 0}\ii | B_\eps\ii ) -
  E\left[\logit^{-1}\left({c+\beta^{(s(i))}_{\eps 0}}'x_i +
      z_{\eps 0}^{(s(i))}\right)\right]\right| . 
$$
Using the intercept changes \eqref{eq:dep} to
\begin{align}
\logit\left(Y\ii_{\eps 0}\,|\,\beta_{\eps 0} \right) &= c_{\eps 0}^i+ z_{\eps 0}^{s(i)}+\beta_{\eps 0}'x_i \label{eq:npma}
\end{align}
 with fixed intercept $c_{\eps0}^i$. 

\subsection{A partially linear dependent tail-free process}

We define the prior \eqref{eq:dep} only for levels $m=1,2$
using the cohort-specific partitions \eqref{Bepsii}, for which the
counts \eqref{nij} allow informed posterior updating.
For the remaining levels the data does not provide any additional
information. We therefore continue the model beyond $m=2$ with
independent beta priors for $Y_{\epsb0}\ii$ using
$\alpha_{\epsb} = a_m$ for all $\epsb=\eps_1\cdots\eps_m$ at level
$m$. We use $a_m=2^m$ to obtain continuous densities. Recall that the partitioning subsets $B\ii_{\epsb0}$ beyond level 2
are defined 
by splitting the parent set $B\ii_{\epsb}$ at the median of the desired
centering measure $F_0$ resticted to $B\ii_{\epsb}$.

Adding this construction the full model takes the following form.
Let $m=|\epsb|$ denote the level in the tree, and let $s=s(i)$ denote
the study that includes cohort $i$. 
\begin{equation}
  \begin{array}{lrcl}     \label{eq:bnpma}
m=1,2: &  \logit \left(Y_{\epsb 0}\ii\right)  & =
  & c_{\epsb 0}\ii + z_{\eps 0}\ss + x_i'{\bepszs}\\
  & \bepszs & \sim & G_{\epsb0}, ~~ m=1,2\\
m>2: & Y_{\epsb0}\ii & \sim & \Be\left(c\cdot 2^m, c\cdot 2^m\right), ~~ m>2
\end{array}
\end{equation}
 with the DP prior on $G_{\epsb0}$ and the hyperprior as before. This completes the construction of the proposed inference model
for meta-analysis. We refer to \eqref{eq:bnpma} as nonparametric 
Bayesian meta analysis (BNPMA). 

\begin{theorem}
The marginal prior expectation of the random distribution $P\ii$ under the BNPMA equals $F_0$. 
\end{theorem}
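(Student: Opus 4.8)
The plan is to establish the identity at the level of the partitioning cells and then lift it to all Borel sets. Concretely, I would first show that $E\{P\ii(B_\epsb)\} = F_0(B_\epsb)$ for every finite string $\epsb=\eps_1\cdots\eps_m$, and then note that the cohort-$i$ partitioning cells form a $\pi$-system generating the Borel $\sigma$-algebra on $(0,\infty)$; since both $A\mapsto E\{P\ii(A)\}$ and $F_0$ are probability measures, the standard uniqueness theorem for measures forces them to coincide on all Borel sets. This reduces the theorem to a statement about the nested cells.

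By \eqref{eq:polya} applied to the cohort-$i$ tree, $P\ii(B_\epsb)=\prod_{j=1}^m Y\ii_{\eps_1\cdots\eps_j}$, so the crux is to factor the expectation of this product. The first key step is the tail-free independence along the path: the factor $Y\ii_{\eps_1\cdots\eps_j}$ is a deterministic function only of the parameters attached to its parent node $\eps_1\cdots\eps_{j-1}$ --- namely $\beta_{\epsb0}\ssi$, $z_{\epsb0}\ssi$ and $\psi_{\epsb0}$ for $m\le 2$ via \eqref{eq:npma}, and an independent $\Be(c2^m,c2^m)$ draw for $m>2$. Distinct nodes on the path carry parameters that are mutually independent in the prior: the $z\ss$ and the beta variates are independent by construction, and although the $\beta$'s are shared across studies through $G_{\epsb0}$, a single cohort uses exactly one draw per node, whose marginal after integrating out the independent priors $G_{\epsb0}$ is precisely the base measure $N(0,c2^{|\epsb0|}I)$. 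Hence the factors are independent and $E\{P\ii(B_\epsb)\}=\prod_{j=1}^m E\{Y\ii_{\eps_1\cdots\eps_j}\}$.

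The second step evaluates each factor and matches it to the $F_0$-conditional probability $F_0(B\ii_{\eps_1\cdots\eps_j}\mid B\ii_{\eps_1\cdots\eps_{j-1}})$. For $m>2$ the child $B\ii_{\epsb0}$ is obtained by splitting $B\ii_\epsb$ at the median of $F_0$ restricted to $B\ii_\epsb$, so the target equals $1/2$, matching $E\{Y\ii_{\epsb0}\}=1/2$ under the symmetric beta prior. For $m\in\{1,2\}$ this equality is exactly what the intercept $c\ii_{\epsb0}$ in \eqref{eq:npma} was constructed to enforce; here I would add the short observation that the construction is attainable, which is needed precisely because the nonlinear $\logit$ does not commute with expectation. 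As $c$ ranges over $\mathbb{R}$, the map $c\mapsto E\{\logit^{-1}(c+x_i'\beta_{\epsb0}\ssi+z_{\epsb0}\ssi)\}$ is continuous (dominated convergence) and strictly increases from $0$ to $1$, so by the intermediate value theorem the minimized absolute deviation is zero and $E\{Y\ii_{\epsb0}\}=F_0(B\ii_{\epsb0}\mid B\ii_\epsb)$ exactly.

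Combining the two steps, the product of conditional probabilities telescopes, \[ \prod_{j=1}^m F_0\!\left(B\ii_{\eps_1\cdots\eps_j}\mid B\ii_{\eps_1\cdots\eps_{j-1}}\right)=F_0(B\ii_\epsb), \] which yields $E\{P\ii(B_\epsb)\}=F_0(B_\epsb)$ on every cell and, by the uniqueness argument above, on all Borel sets. The step I expect to be the main obstacle is the independence bookkeeping underlying the factorization: one must marginalize the Dirichlet process correctly so that a single cohort's coefficient collapses to the normal base measure, and check that nodes at different levels never reuse a parameter, so that the expectation of the product genuinely factors. Once that is secured, the symmetric-beta mean, the intercept matching, and the telescoping are all routine.
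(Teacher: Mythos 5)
Your proof is correct and follows the same basic strategy as the paper's: exploit the independence of the splitting variables across nodes (marginal tail-freeness) and match the expected splitting probability at each node to the corresponding conditional probability under $F_0$ --- via the intercepts $c_{\eps 0}\ii$ at levels $m=1,2$, and via the symmetric $\Be(c\,2^m,c\,2^m)$ prior together with the conditional-median splits at levels $m>2$. The difference lies in what is proved versus cited. The paper notes that the process is marginally tail-free and then appeals to the general centering criterion for tail-free processes in \citep{ghosal_fundamentals_2017}, so that only the cell-wise identity $E\{P\ii(B_\epsb)\}=F_0(B_\epsb)$ needs to be checked; you instead prove that reduction yourself: factorization of $E\{P\ii(B_\epsb)\}$ along the path using the fact that a single cohort draws one parameter per node (with the DP marginalizing to its Gaussian base measure), telescoping of the conditional probabilities, and the $\pi$-system/uniqueness-of-measures argument to pass from the partition cells to all Borel sets. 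More importantly, you supply a step the paper glosses over: the intercept is defined only as the minimizer of an absolute deviation, and the paper's proof simply asserts that the resulting expectation ``matches'' $F_0(B_{\eps 0}\mid B_\eps)$. Your intermediate-value argument --- continuity and strict monotonicity of $c \mapsto E\left[\logit^{-1}\left(c + x_i'\beta_{\eps 0}\ssi + z_{\eps 0}\ssi\right)\right]$ with limits $0$ and $1$ --- shows the minimum is exactly zero, i.e., exact centering is attainable, which is what the theorem actually requires. The paper's route buys brevity; yours buys a self-contained argument and closes that small gap.
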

\begin{proof}

In the following argument we drop the $i$ sub and suuper indices and let $s=s(i)$ denote the study cohort $i$.

Start by noting that the splitting probabilities under the Dirichlet process with a Gaussian base-measure are marginally independent logit-normal random variables. Then

$$
c_{\eps 0}+x'\beta_{\eps 0}^s |  z_{\eps 0}^{s} \sim  N\left(c_{\eps 0}+ z_{\eps 0}^{s}, 2^{|\eps 0|}x ' x\right).
$$

Independence accross $\eps$ implies that the process is (marginally) tail-free. Then to achieve the desired centering all we need \citep{ghosal_fundamentals_2017} is for all $\eps$ and $i$
$$
E\left[P_i\ii(B_\eps)\right] = F_0(B_\eps)
$$

The moments of the inverse-logit distribution are not available in closed form \citep{holmes_moments_2022}, but can be calculated through numerical integration. But recall that $c_{\eps 0}$ was fixed such that the expected value of each $\logit^{-1}\left(c_{\eps 0}+ z_{\eps 0}^{s}+x'\beta_{\eps 0}^s \right)$ matches $F_0(B_{\eps 0})$, which centers the process in the desired distribution.
\end{proof}

\section{Posterior inference}

We implement posterior MCMC simulation for $P_i$ for all
cohorts $i=1,\ldots,I$. Posterior inference on $P_i$ then implies posterior
inference on the median $M_i=\Md(P_i)$, and the difference of  log medians
$\log(M_\ip)-\log(M_\im)$ for any pair of cohorts $\ip,\im$ which
define matching marker-positive and marker-negative cohorts in the
same study $s=s(i)$.  
%
Since the splitting probabilities $Y_{\epsb0}\ii$ 
are all independent  across $\epsb$, we can treat each $\epsb$ as
a separate inference problem, and implement separate instances of posterior MCMC simulation.

Consider thus the task of estimating the branching probability for a
given $\epsb$.
 We focus on the first two levels, $m=1,2$ only. Given $Y_{\epsb0}\ii$
for the first two levels the remaining splitting probabilities are
easily imputed from the prior -- there is no information in the data
and therefore no posterior updating beyond level $m=2$.

\paragraph*{ Updating \bch $G_{\epsb0}$.}

The random probability measures $G_{\epsb0}$ are {\em a
  posteriori} independent and can be updated in parallel.
For the upcoming discussion we therefore drop the $_{\epsb0}$ index. 
We use a finite DP approximation
\citep{ishwaran_gibbs_2001}.
Let then $G = \sum_{h=1}^H w_h \delta_{b_h}$ denote a finite truncation
of $G$ after $H$ terms.
 The unique values $\betas_k$ that were introduced in the comments
following \eqref{bDP} are tied with some of the $b_h$, but
not necessarily in the same order. For the upcoming discussion it is
convenient to instead index the unique values with $h$ for the
corresponding $b_h$, noting that there can be some
values $b_h$ with $\beta\ss \ne b_h$ for all $s$.
Letting $C^s_h=\{s:\; \beta\ss=b_h\}$ defines a partition of
$\{1,\ldots,S\} = \bigcup_{h=1}^H C^s_h$,  allowing for empty
clusters $C^s_h=\emptyset$ when $b_h \ne \beta\ss$ for all $s$.
Similarly we define $C_h=\{i:\; \beta^{(s(i))}=b_h\}$ as the
corresponding partition of cohorts.
For an alternative characterization of the partition we use cluster
membership indicators $h_i=h$ if $i \in C_h$.

The DP prior $G \sim \DP(\Gs, \alpha)$ can be constructively defined 
as $w_h=v_h \prod_{\ell<h} (1-v_h)$ with $v_h \sim \Be(1,\alpha)$
and $b_h \sim \Gs$, i.i.d., a priori. This is known as the
stick-breaking construction \citep{sethuraman_constructive_1994}. 
Posterior updating for the weights is carried out by updating
the stick-breaking components $(v_h)$ using the complete
conditional posterior
$v_h \mid \ldots
\sim \Be\left(m_h, \alpha + \sum_{j=h+1}^H m_j\right)$, where $m_h$
is the number of cohorts with $\beta\ii=b_h$. 

\paragraph*{ Logistic regression parameters}~
 For empty clusters $b_h$ is resampled using the base measure
$\Gs$.
For non-empty clusters $C_h$ we proceed as follows.
For each cohort $i$ let $n_i$ denote the number of observed
event times in $B\ii$ (recall that we are dropping the $\epsb0$
indices) given in \eqref{nij}, and let $N_i$ denote the count in the
corresponding parent set. Then the conditional posterior for
$b_h$ given the observed counts $n_i, N_i$ is 
\begin{equation}
p(b_h \mid \hb, \nb) \propto
\Gs(b_h) \cdot 
\prod_{i \in C_h}
    \frac{{\exp(x_i'b_h)}^{n_i} }
         {{(1+\exp(x_i'b_h))}^{N_i} }
    \label{pbk}
  \end{equation}
We include the random effects $z\ss$ in $b_h$, replacing $b_h$ by $\bt_h=b_h \cup \{z\ss:\; s \in
  C^s_h\}$ and $x_i$ by $\tilde{x}_i = \begin{bmatrix}x_i & \mathcal{Z}_i\end{bmatrix}$, where $\mathcal{Z}_i$ is the design vector for the random effects. Hence in the following discussion we will group together $z\ss$ with $b_h$ and update them simultaneously. 

We use the data augmentation method of \cite{polson_bayesian_2013} to
implement sampling of \eqref{pbk}.  The scheme involves a latent
Pólya-gamma distributed random variables $\om_i$, with complete
conditionals for $\om_i$ being a scaled Polya-gamma and for $\tilde{b}_h$
being normal linear regression posterior distribution. See \cite{polson_bayesian_2013} for
details. 

\paragraph*{ Gibbs sampler}

We define a Gibbs sampler using the algorithm in \cite{ishwaran_gibbs_2001} by iterating over the following steps. Here $\ldots$ in the conditioning sets stands for all other parameters and the data,
$h=h_i$ when the relevant cohort $i$ is understood from the
context, 
$\hb=(h_1,\ldots,h_n)$,
$m_h = |C_h|$ is the number of cohorts in cluster $h$,
$\nb=(\nb_i,\; i=1,\ldots,I)$,
$\omb=(\om_i,\; i=1,\ldots,I)$ and
$\PG(a,b)$ indicates a Pólya-Gamma distribution with parameters
$(a,b)$.

We denote the design matrix (including the random effects)
for a non-empty cluster $C_h$ as $X_h$, and the diagonal matrix constructed from
the PG latent variables belonging to $C_h$ as $\Omega_h$. Let $\tilde{b}_h=(b_h, z^{(s)}; s\in C_h^s)$ and $\tilde{x}_i$ the $i$-th row of the extended design matrix with the random effects included and let then the prior variance of $\tilde{b}_h$ be  
$$
\Sigma_0 = \begin{pmatrix}
    c\,2^{-m} I & \bf{0}\\ \bf{0} &\psi I
\end{pmatrix},
$$ for $\psi >0$.

In step 1, let $\kappa_i = n_i-N_i/2$ and $\kappab_h=(\kappa_i,\; i \in C_h)$. 

\begin{enumerate}
\item $p(\tilde{b}_h \mid \hb, \omb, \cdots) =
  N\left(\mu_\omega, V_\omega\right)$ where $V_\omega = (\tilde{X}_h'\Omega_h \tilde{X}_h+\Sigma_o^{-1})^{-1}$
  and $\mu_{\omega_h} = V_{\omega_h}(\tilde{X}_h'\kappab_h+\Sigma_o^{-1}\mu_o)$
\item $p(\om_i \mid \cdots \kb, \betas_{h_i},\ldots)
  = \PG(1, \tilde{x}_i' \tilde{b}_{h_i})$
 \item $p(h_i \mid \ldots) =
   \sum_{h=1}^H p_h\delta_{h}$ with
   $p_h \propto \left(w_h\,
     \text{Binom}(n_i \mid N_i,
     \logit^{-1}( \tilde{b}_h' \tilde{x}_i ) \right)_k$ 
\item $w_h = (1-V_1)(1-V_2)\cdots (1-V_{h-1})V_h$, where $V_h
  \ind \Be\left(m_h, \alpha + \sum_{\ell= h+1}^H m_\ell\right)$ 
\item Recall the hyperprior $\psi \sim \text{InvGamma}(a,b)$, implying the familiar conjugate InvGamma conditional posterior distribution.
\end{enumerate}

\section{Simulation study}
We set up a simulation generating data for $S=30$ studies with $I=60$
cohorts, including one marker-positive cohort $\ip$ and one marker-negative 
cohort $\im$ for each study.
For each study, both, $\ip$ and $\im$, share the same covariates, a
normal variable $x_{i1} \sim N(5,1)$ 
and a binary Bernoulli random variable $x_{i2} \sim \Bern(0.7)$. Next we generated study-specific random effects $\gamma_s \sim
\Unif(-0.3, 0.3)$ for each study. With these variables we then simulated data from a mixture of two
Weibull regressions (c.f. \citep{lemeshow_applied_2008}), choosing each
of the two regression models with equal probability.
Let $k_i \sim \Bern(0.5)$ denote a group assignment for cohort $i$.
In summary, assuming $n_i=50$ patients for each study, the generative
model for simulated event times $t_{ij}, j=1,\ldots,n_i=50$ patients per
cohort is as follows.
\begin{align}
  \log(t_{ij}) &= \betab_{k_i}' x_i + \gamma_i + \log w_{ij}
  \mbox{ with } w_{ij} \sim \text{Exp}(1).
                 \nonumber \\
\gamma_s &\sim \text{Unif}(-0.3, 0.3) 
           \label{M0}
\end{align}
We use $\betab_1 = (0.4, 0.2)$ and $\betab_2 = (0.7, 0.5)$.
Based on the simulated patient-level data we then consider
Kaplan-Meier plots, and read off estimates $m_i$ of the median and a
corresponding 95\% confidence interval $(\ell_i,h_i)$.
We record the quadruple $q_i=(\ell_i,m_i,h_i,n_i)$ as the hypothetical
data in the simulation.
Note that in this construction the simulation truth is different from
the analysis model that is assumed under the proposed BNPMA approach.
Figure \ref{fig:surv_sim} shows the simulation truth for each of the
60 simulated cohorts, separated by true cluster assignment. 

Conditional on these data we then implement Markov chain Monte Carlo (MCMC)
posterior simulation as described before.
We simulate 50,000 iterations to evaluate posterior summaries,
including posterior expectations for the median $M_i$ of the event time
distributions $P_i$.
We compare the posterior estimates with the simulation truth $M_i^o$
under \eqref{M0}.
The latter are calculated from the Weibull
regression as $q_{\frac 12} = \log\left(\frac 12\right) \exp{\betab_{k_i}'x +
  \gamma_i}$.
The earlier, i.e., the posterior means, are obtained from the 
posterior Monte Carlo samples by 
evaluating for each iteration of the MCMC simulation
the branching probabilities and then constructing a histogram
as in \eqref{eq:polya} to determine the implied median $M_i$
corresponding to each posterior sample. Finally, the Markov chain
Monte Carlo average of $M_i$ across iterations evaluates the posterior
expectation $\Mb_i = E(M_i \mid data)$.
Figure \ref{fig:sim}(a) compares posterior estimates $\Mb_i$ versus the
simulation truth $M_i^o$. The accumulation of the scatter plot around
the identity is evidence that posterior inference under BNPMA was
indeed able to reocver the true effects under these realistic sample
sizes and effect.
\begin{figure}[bt]
  \begin{tabular}{cc}
    \includegraphics[height=1.8in]{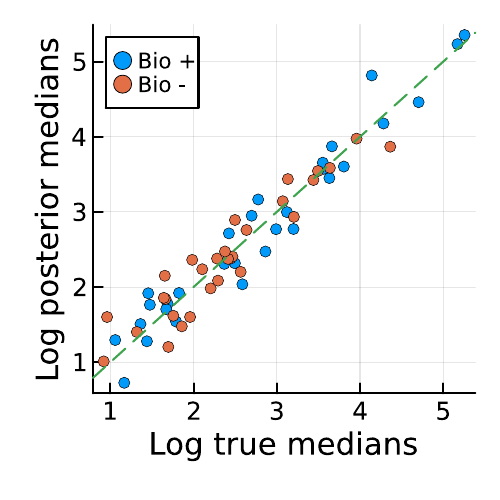}
    & \includegraphics[height=2.3in]{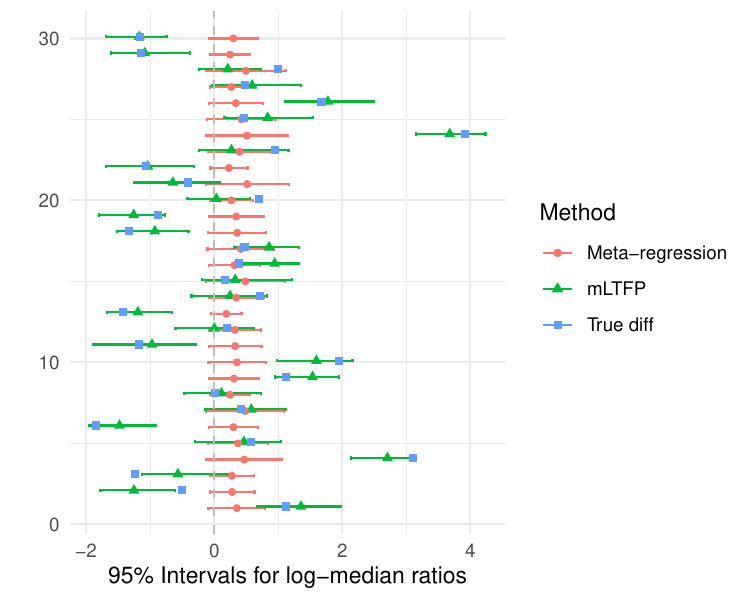}\\
    (a) & (b)
  \end{tabular}          
    \caption{Posterior inference on the medians $M_i$ in the
    simulation study. Panel (a) 
    plots posterior expected medians $\Mb_i=E(M_i \mid data)$ vs true
    medians $M_i$ for all cohorts (in log-log scale).
    Panel (b) shows a forest plot of the estimated log-ratio of
    medians and corresponding confidence and credible interval using
    classical meta-analysis using a random effects model (red, with bullets) and the BNPMA (green with
    triangles), respectively.  For reference, the simulation truth is also
    shown (blue, with little boxes).}
    \label{fig:sim}
\end{figure}

Finally, the forest plot in Figure \ref{fig:sim}(b) shows the
estimated log-ratio of medians for pairs $(\ip,\im)$ of
marker-positive and marker-negative cohorts under each study.
The figure shows estimates from a meta regression using both, the
proposed BNPMA and an established frequentist inference method
\citep{mcgrath_meta-analysis_2020}. For the latter first the mean and standard deviations are obtained from the medians and intervals using the estimators developed in \cite{luo_optimally_2018}; after which a random effects model with moderators is fitted.

The frequentist model shows shrinkage towards an overall effect,
with little variation across its estimates. We estimated a random effects model using the function \texttt{rma} from the package \texttt{metafor} with default parameters, which induced the strong shrinkage seen in the figure. However, note the horizontal scale of the figure, to accommodate the fit of the several outlier studies, which makes the shrinkage under the meta-regression appear more extreme. BNPMA instead induces more heterogeneity, still with shrinkage of the estimates towards the same overall effect, but allowing the fitting of the several outlier studies.

\begin{figure}[bt]
    \centering
    \includegraphics[width = 0.3\linewidth]{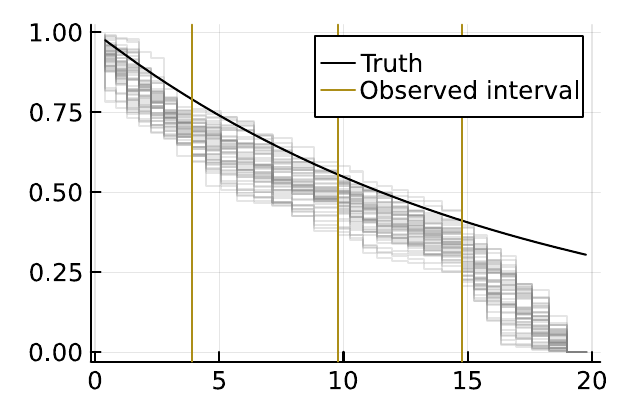}
    \includegraphics[width = 0.3\linewidth]{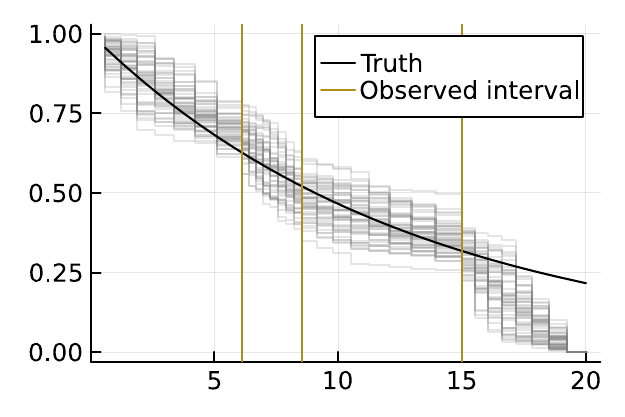}
    \includegraphics[width = 0.3\linewidth]{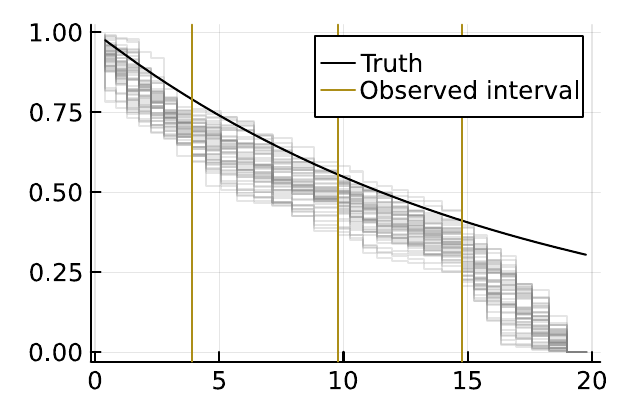}
    \caption{Plot of 50 posterior simulated  survival functions for three
      cohorts (thin grey lines) versus the simulation truth (black
      line).
      The vertical lines mark the observed median and confidence
      interval limits. }
    \label{fig:surv_est_sim}
\end{figure}
Finally, Figure \ref{fig:surv_est_sim} illustrates $p(P_i \mid data)$ by way of
showing 50 posterior draws for the corresponding survival function for
three randomly selected cohorts.
The medians for the estimated curves are close to the true medians and
uncertainty around the curve increases away from the reported interval
boundaries $(\ell_i, m_i, h_i)$ (which are marked by vertical lines).
Keep in mind that the data only informs about the counts in the intervals
between the vertical lines. Consequently posterior inference
can only recover the probability mass for each interval, but not
possibly any finer details. This is most evident in the right most
intervals. The data provides no information about the right tail.

\section{Results}

We fit the proposed BNPMA model to the data presented in Section 2. 
In levels $m=1,2$ of the nested partition tree we
specify the base measure of the
DP prior  \eqref{eq:dep} as a normal distribution,
$\text{N}(0,\sigma^2_m I)$.
Recall that $G_{\eps 0}$ is the prior for the
regression coefficients $\beta_{\eps 0}^{(s)}$, justifying the zero
mean as a symmetric prior. Also,
keeping in mind the nature of the covariates as categorical
indicators, we use $\sigma^2_m = 2^{m+1}$ to define a vague prior.

We implemented the BNPMA with the two (categorical) covariates
of therapeutic agent and tumor type, both coded as multiple binary
indicators.

\paragraph*{Estimating $P_i$.}
We implemented posterior MCMC simulation as proposed before.
After a burn-in of 49,000 iterations, over another 1000 iterations,
$j=1,\ldots,1000$, 
we evaluated the imputed event time distributions $P_i$ for each
iteration.
Figure \ref{fig:surv_52} illustrates estimation of $P_i$ by showing
the corresponding survival function for a randomly selected cohort.
Note the multimodality (in the implied event time distribution),
justifying the use of the nonparametric prior for $P_i$.
\begin{figure}[h]
    \centering
    \includegraphics[width = 0.4\linewidth]{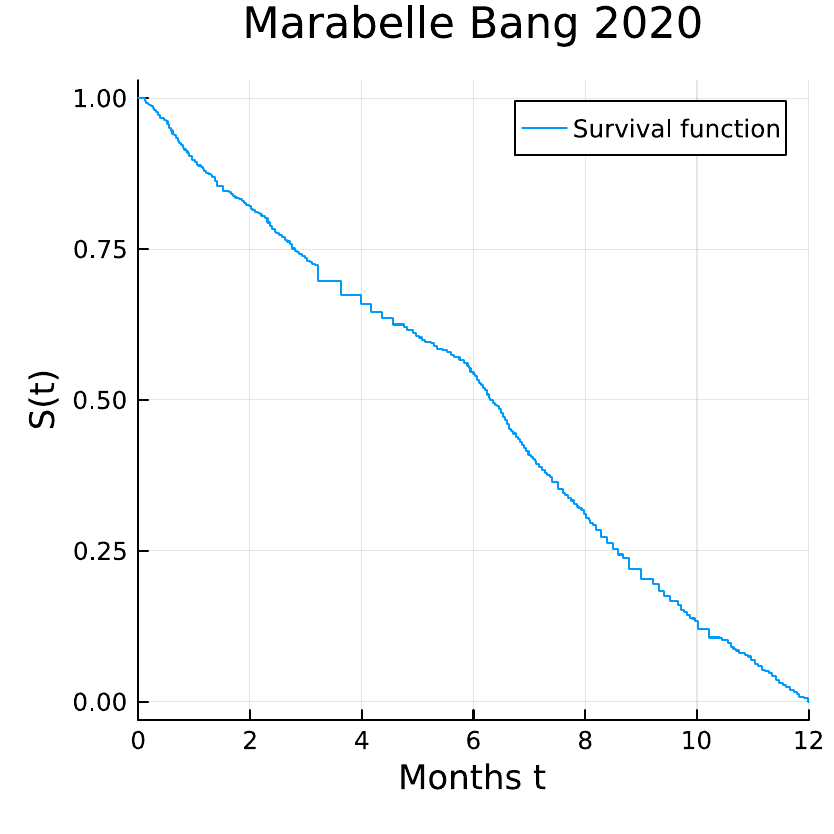}
    \caption{Posterior estimated $P_i$ for a randomly chosen cohort,
      shown as a survival function. Note the apparent changepoint (around
      6). }
    \label{fig:surv_52}
\end{figure}

As in the simulation study we then
evaluated the implied medians $M_i\jj$.
For a study $s$ with marker-positive and marker-negative cohorts $\ip$
and $\im$, let $D_s=\log\{M_{\ip}/M_{\im}\}$ define the difference of
log median event times, and let $D_s\jj$ denote the same under the
parameters imputed in iteration $j$.
Summaries of $D_s\jj$ report the effect of marker-status.
Averaging over all studies we find $1/S \sum p(D_s > 0 \mid data)=.743$, 
This is the posterior probability of larger median event time for
marker-positive than for marker-negative patients.

Further, we compare posterior inference on $D_s$  with the
reported confidence intervals in the original studies. We use 95\%
posterior credible intervals. The comparison is shown in Figure
\ref{fig:forest_compara}. Overall, the credible intervals
report similar effects as the reported confidence intervales, but are
shorter due to the sharing of strength in the BNPMA, and outliers are shrunk towards an overall mean.

\begin{figure}[bt]
  \begin{tabular}{cc}
    \includegraphics[width =
    0.47\linewidth]{Figures/forest_median.pdf} 
    & \includegraphics[width =
      0.47\linewidth]{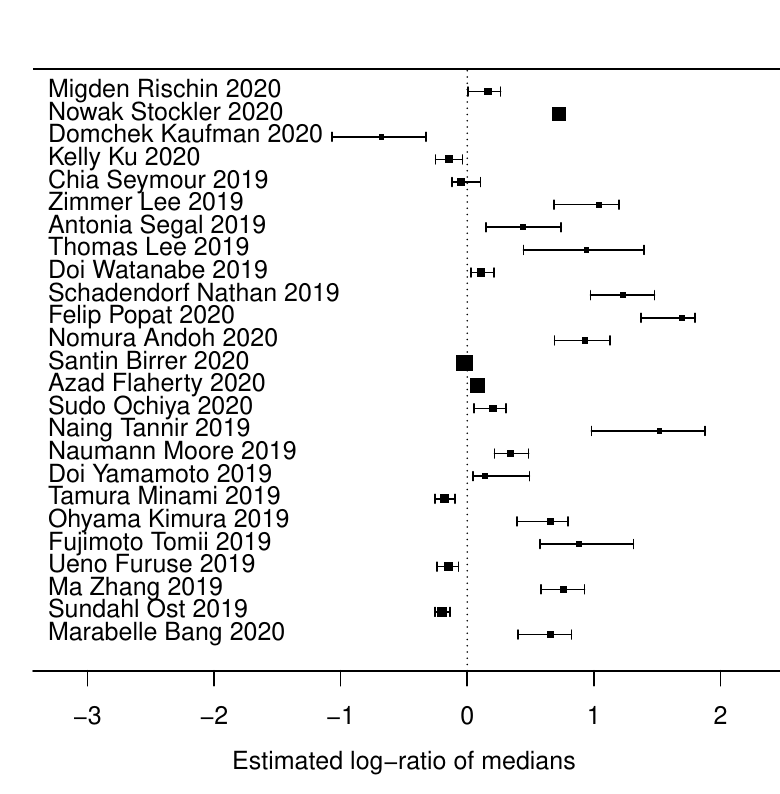}\\
    (a) reported $(\ell_i,m_i,h_i)$ & (b) posterior estimates
  \end{tabular}                    
  \caption{Forest plots of the observed (left) and estimated (right)
    intervals for the median effect of the therapeutic agents. Note
    the difference in scales between both plots.} 
    \label{fig:forest_compara}
\end{figure}

\paragraph*{Meta-regression and posterior predictive inference.}
For meta-regression we are interested in the effect of tumor types or
treatments on $D_s$. 
Since the model does not include any parameters that are
explicitly interpretable as covariate effects on $D_s$ (the $\beta_s$
are regression coefficients for the logit splitting probabilities), we need to
improvise and proceed as follows.
To report covariate effects for categorical covariates, such as agent
and tumor type, we combine $D_s\jj$ across all
studies $s$ with the corresponding value of the covariate. We then
report the resulting distribution of $D_s\jj$ as posterior
distribution for the effect of a covariate value of interest.
We show the summaries for some covariates of interest in 
Figure \ref{fig:sign}. 
For the frequentist analysis we implement a
meta-regression using the \texttt{R} package \texttt{metamedian}. For
the BNPMA results we reported 95\% credible intervals for the difference of
medians for studies with the covariate
vs without it. Note that the posterior intervals for the
difference of log medians under the BNPMA are generally narrower than under the frequentist method, most likely due to the information
sharing between different studies. Also, the clustering allows for
multimodality. According to these results, we find the main effects of
durvalumab, pembrolizumab, patients with melanoma and with the
overflow category of tumors "other" to be non-signficant; in contrast with
the classical method, which reports only nivolumab and melanoma as non
significant.

\begin{figure}[hpbt]
    \centering
    \includegraphics[width = 0.5\linewidth]{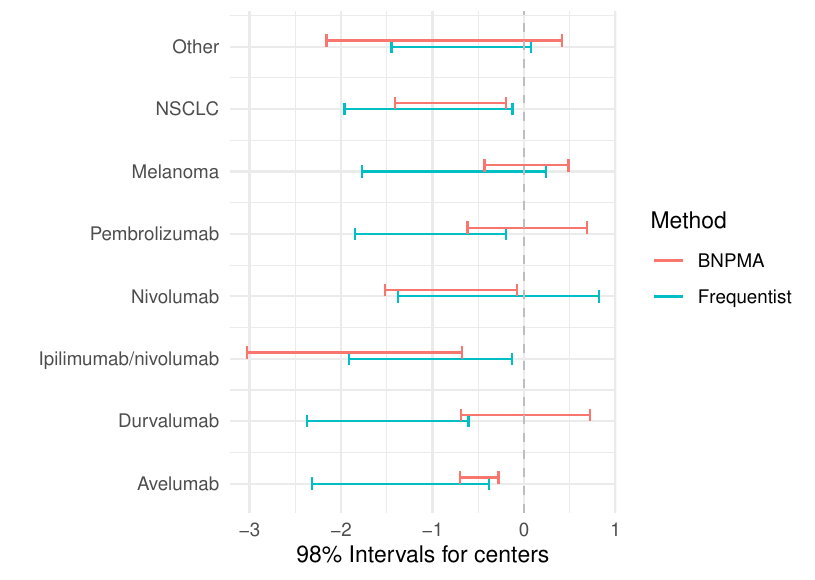}
    \caption{95\% credible (confidence) predictive intervals for the
      difference of log medians induced by different covariates, namely tumor types and treatments. }
    \label{fig:sign}
\end{figure}

We also perform a two-way interaction analysis. For this we evaluate
the predictive intervals for future studies with the observed
tumor-treatment combinations, shown in Figure \ref{fig:inter}.

We again obtain a similar conclusion as under the frequentist method,
with only four disparities. The main one concerning the joint effect of
nivolumab and "other", as the signs of the effect in the difference of
medians are opposite; however, note that the other
interactions with "other" are skewed to the negative axis, suggesting that
result is most likely due to the estimated dependence structure
between effects.

\begin{figure}[htpb]
\centering
  \begin{tabular}{cc}
    \includegraphics[width = 0.34\linewidth]{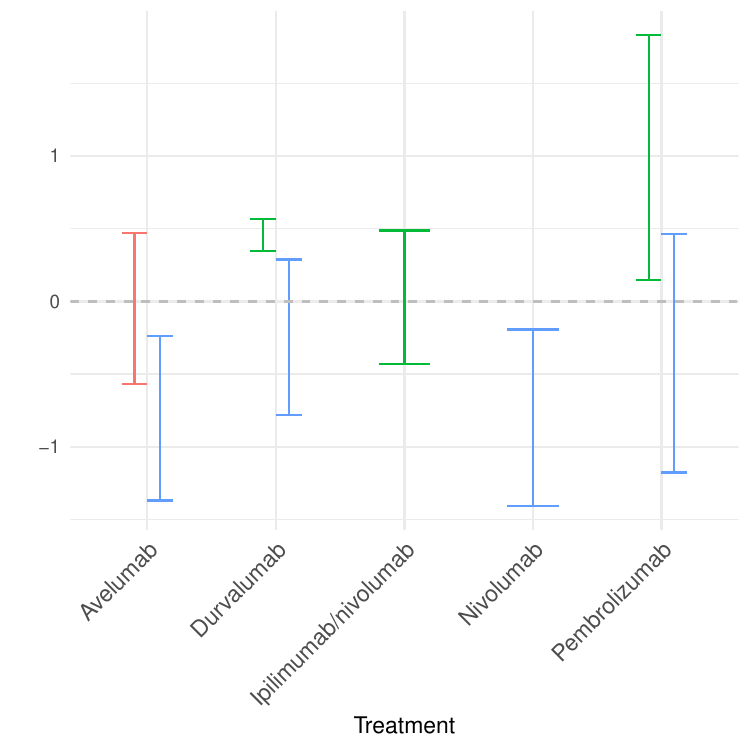}
    & \includegraphics[width =
      0.34\linewidth]{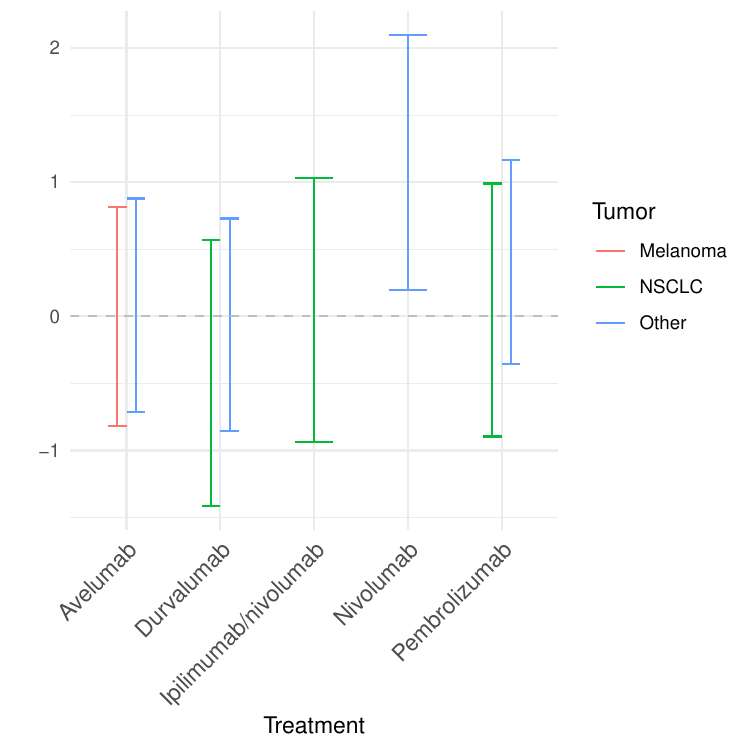}
   \end{tabular}
    \caption{95\% credible (confidence) predictive intervals for the
      second order interaction for the indicated tumor and treatment
      levels.
      The left panel shows the results under the BNPMA; the
      right panel shows the same under classical meta-regression using
      a random effect model.}
    \label{fig:inter}
\end{figure}

For posterior predictive inference for specific tumor types or
treatments  we proceed similarly as for
meta-regression.
\begin{figure}[htb]
\centering
    \includegraphics[width = 0.25\linewidth]{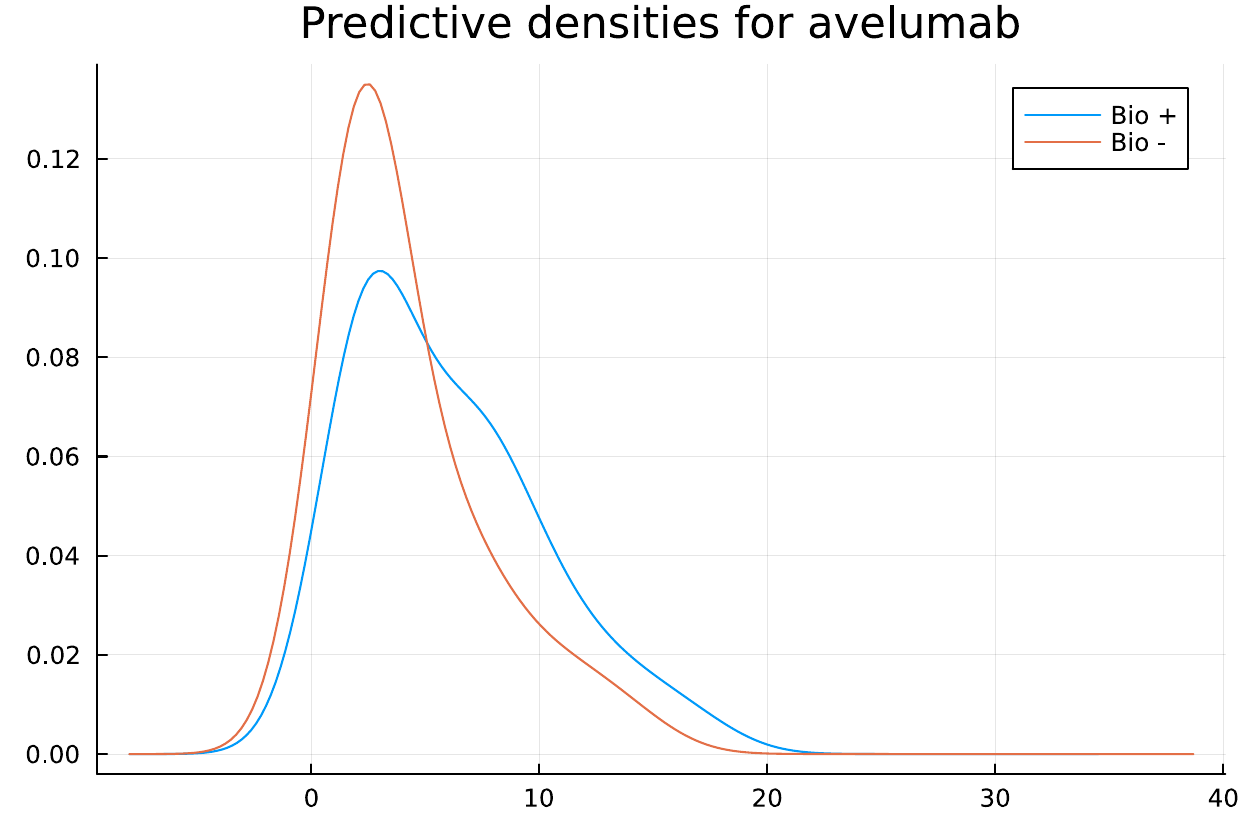}
    \includegraphics[width = 0.25\linewidth]{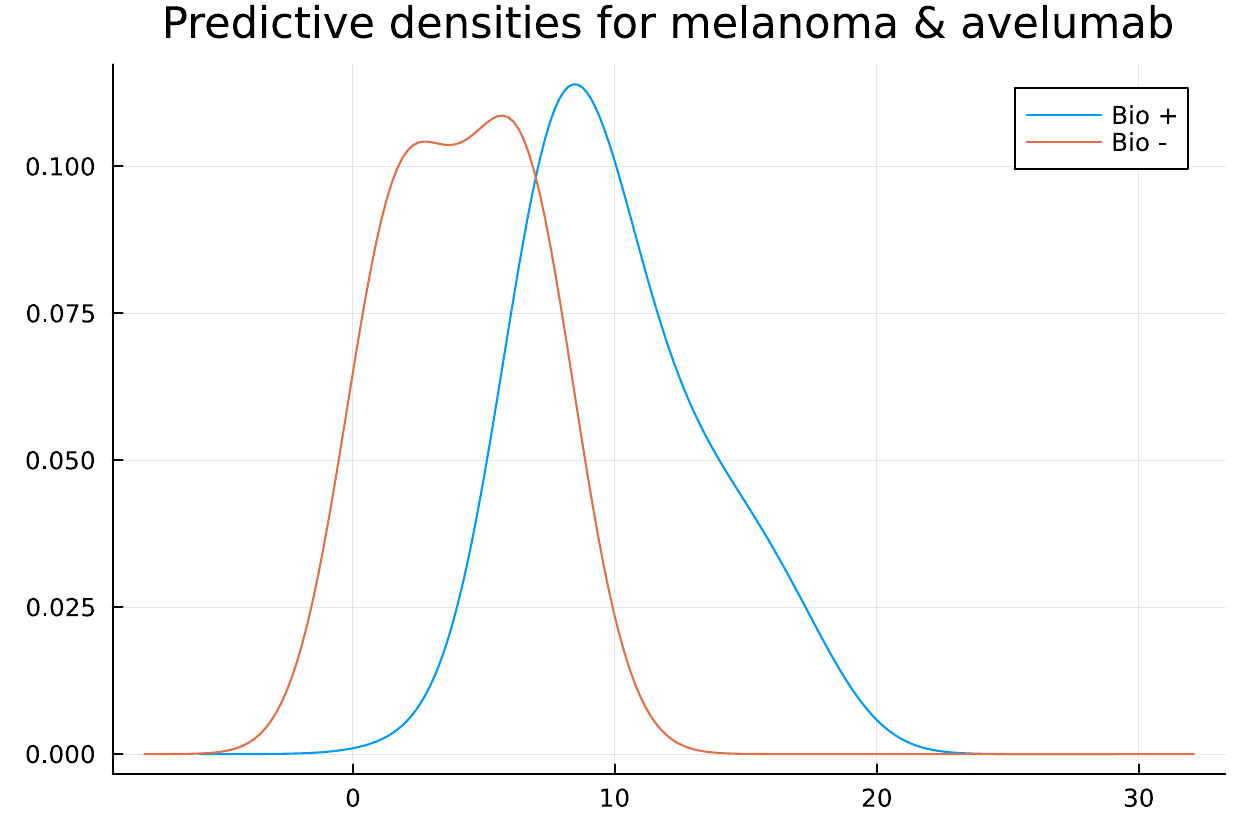}
    \includegraphics[width = 0.25\linewidth]{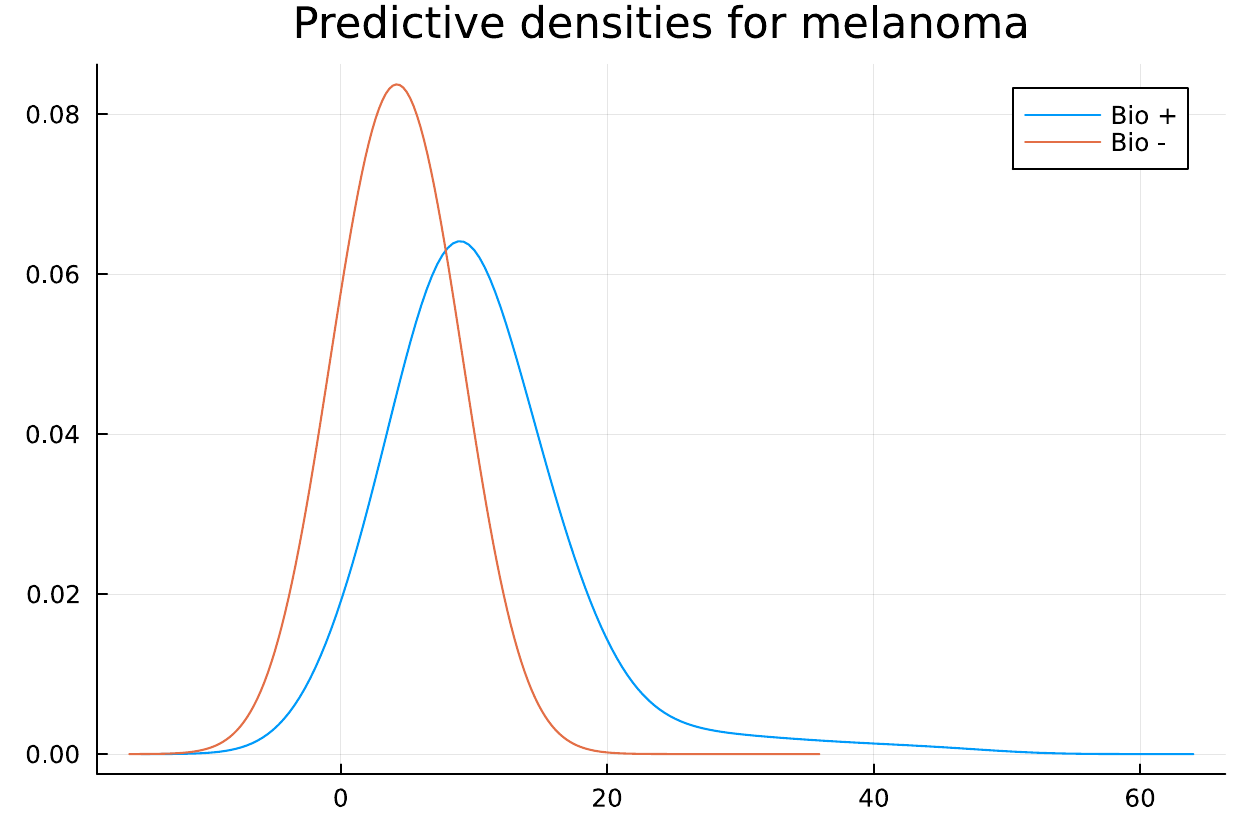}
    \vspace{-5pt}
    \caption{Estimated distributions of the PFS for biomarker-positive and negative patients treated with avelumab, regardless of the type of cancer; with melanoma, regardless of the treatment; and with melanoma treated with avelumab.}
    \vspace{-5pt}
    \label{fig:preds}
\end{figure}
For example, to report a predictive distribution of event
times, PFS in our case, for melanoma we average imputed $P_i$ across
all cohorts with melanoma patients. 
Figure \ref{fig:preds} shows predictive densities for the most
common tumor type and treatment: melanoma and avelumab. The smoothed densities imply a stronger effect of the said
immunothreapy
on biomarker-positive melanoma patients than for biomarker-negative
patients.

\paragraph*{Study heterogeneity -- clustering.}
Finally we report results on the inference for study heterogeneity,
i.e., on the random partition of studies induced by the DP mixture
prior on $\beta_{\epsb0}\ss$. See Section 2 in the supplement for details on how we determine a point estimate for a random partition.

Figures \ref{fig:cluster} and
\ref{fig:clusterfreq} show summaries of the estimated partition of
studies. 
The three largest groups show a model split
 according to different characteristics of the trials; for
instance, clusters 1 and 2 mainly have trials with avelumab and
pembrolizumab, but the first one excludes breast cancer for melanoma
and NSCLC while the second one includes does the opposite.

\begin{figure}[htbp]
    \centering
\includegraphics[width=0.28\linewidth]{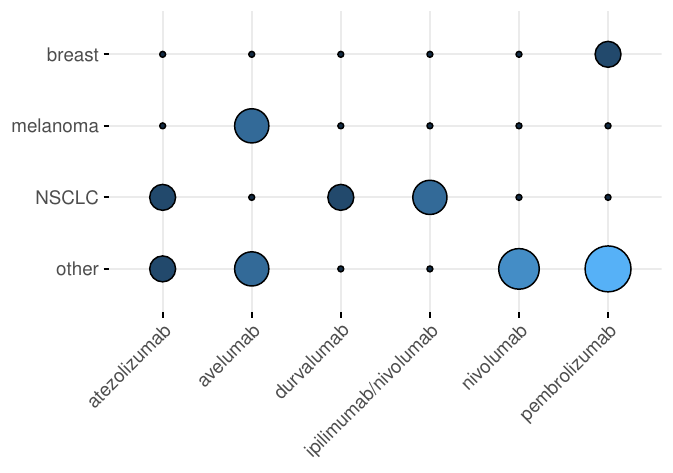}
    \includegraphics[width=0.28\linewidth]{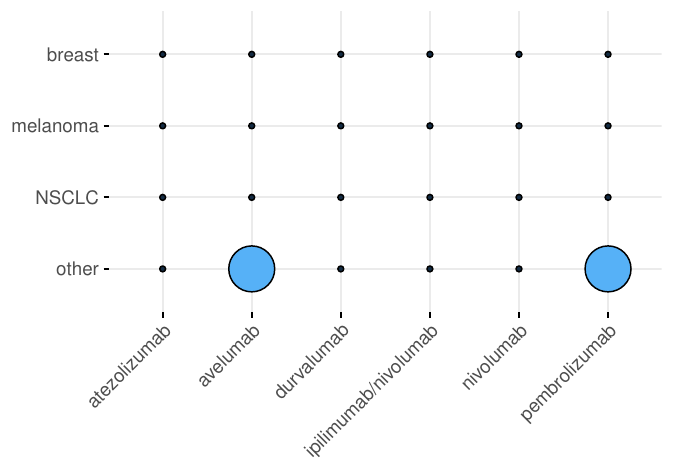}
    \includegraphics[width=0.28\linewidth]{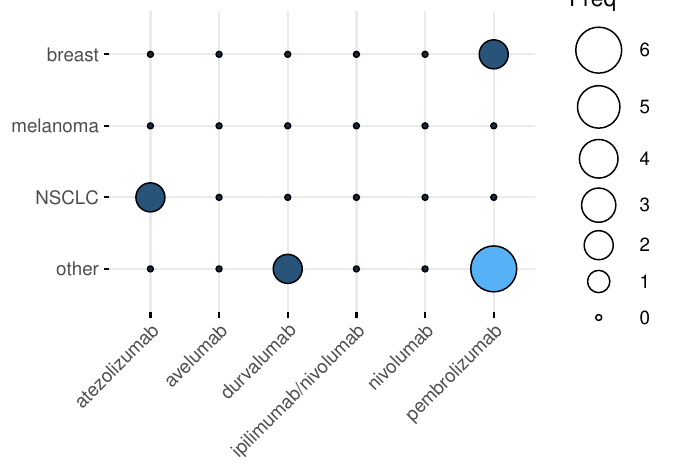}
    \caption{Contingency tables for treatment and tumor type for the three largest clusters.}
    \label{fig:clusterfreq}
\end{figure}

\begin{figure}[htbp]
    \centering
    \includegraphics[width = 0.48\linewidth]{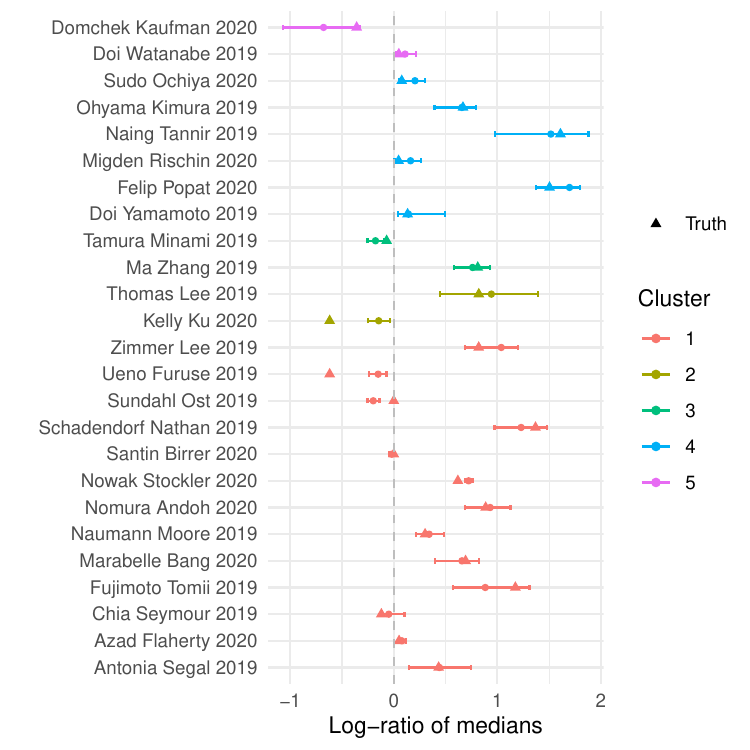}
    \caption{Posterior forest plot. Each color represents one different cluster. Cluster 3 was excluded since it included studies with small sample sizes the \texttt{metamedian} package could not use.}
    \label{fig:cluster}
\end{figure}

\section{Conclusion}

We have introduced a novel non-parametric Bayesian approach for
meta-analysis with event time endpoints. The model naturally
accommodates heterogeneity across included studies and allows
inference without restrictive parametric assumptions.
Using a non-parametric Bayesian mixture of LDTP models allows for flexibilty in the regression on study-specific
covariates, while still taking advantage of the parsimonious nature of
the LDTP. Several alternative extensions of the LDTP are possible, including, for example, the use of more general dependent tailfree processes to link models across studies (or cohorts), as in \cite{poli_multivariate_2023} or in \cite{jara_class_2011}.

Limitations of the proposed approach include the dependence of
inference on the unknown distributions on the chosen partition
sequences, including a lack of smoothness in the estimated densities
at the partition boundaries. Reporting survival functions, as is
customary for event time data, this is less of a problem.
Another limitation is the focus on event time outcomes. The model is
not appropriate, for example, for binary outcomes like tumor response,
but could be used without modification for any other continuous
outcome. This is related to another limitation of the model by not
specifically accommodating censoring. Instead inference hinges only on
the reported point and interval estimates of median event times.
The only practical constraint is that other continuous outcomes might
not commonly be summarized by estimates of median outcomes.

\clearpage
\bibliographystyle{apalike}
\bibliography{main_arxiv}
\clearpage
\end{document}